\newif\if@restonecol
\newtheorem{lemma}{Lemma}
\newtheorem{theorem}{Theorem}
\newtheorem*{theorem*}{Theorem}
\newtheorem{corollary}{Corollary}
\newcommand{\Rmnum}[1]{\expandafter\@slowromancap\romannumeral #1@}
\DeclareMathOperator*{\argmax}{arg\,max}
\newcommand{\R}{\mathbb{R}}
\newcommand{\pay}{\tau}
\newcommand{\E}{\mathds E}
\newcommand{\feasible}{\phi}
\newcommand{\budget}{B}
\newcommand{\groundSet}{V}
\newcommand{\selectSet}{S}
\newcommand{\selectSetAlg}{\ensuremath{\widetilde{S^{*}}}}
\newcommand{\funcf}{f}
\newcommand{\neighbor}{\ensuremath{\mathcal{N}}}
\newcommand{\graph}{G}
\newcommand{\edges}{E}
\newcommand{\opt}{\textsc{Opt}\xspace}
\newcommand{\ppcgreedyalg}{\textsc{PPCGreedy}\xspace}
\newcommand{\ppcgreedyalgiter}{\textsc{PPCGreedyIter}\xspace}
\newcommand{\supergroundsetalg}{\textsc{CoupleSuperSet}\xspace}
\newcommand{\supergroundset}{\ensuremath{\mathcal{Z}}}
\newcommand{\citet}[1]{\citeauthor{#1} (\citeyear{#1})}
\begin{document}
\title{Information Gathering with Peers: \\Submodular Optimization with Peer-Prediction Constraints}
\author{
Goran Radanovic\\
Harvard University\\
Cambridge, USA\\
gradanovic@g.harvard.edu\\
\And
Adish Singla\\
MPI-SWS\\
Saarbr{\"u}cken, Germany\\
adishs@mpi-sws.org\\
\And
Andreas Krause\\
ETH Zurich\\
Zurich, Switzerland\\
krausea@ethz.ch\\
\And
Boi Faltings\\
EPFL\\
Lausanne, Switzerland\\
boi.faltings@epfl.ch\\
}

\maketitle
\begin{abstract}
We study a problem of optimal information gathering from multiple data providers that need to be incentivized to provide accurate information. This 
problem arises in many real world applications that rely on crowdsourced data sets, but where the process of obtaining data is costly. A notable 
example of such a scenario is crowd sensing. To this end, we formulate the problem of optimal information gathering as maximization of a submodular
function under a budget constraint, where the budget represents the total expected payment to data providers. Contrary to the existing approaches, we base our 
payments on incentives for accuracy and truthfulness, in particular, {\em peer-prediction} methods that score each of the selected data providers against its best peer, while 
ensuring that the minimum expected payment is above a given threshold. We first show that the problem at hand is hard to approximate within a constant 
factor that is not dependent on the properties of the payment function. However, for given topological and analytical properties of the 
instance, we construct two greedy algorithms, respectively called PPCGreedy and PPCGreedyIter, and establish theoretical bounds on their 
performance w.r.t. the optimal solution. Finally, we evaluate our methods using a realistic crowd sensing testbed.  
\end{abstract}

\section{Introduction}\label{sec.introduction}

The recent success of various machine learning techniques can partly be attributed to
the existence of large sets of labeled data that can readily be used for training purposes. In the past decade, the 
predominant form of obtaining useful data is through crowdsourcing approaches, where human subjects either directly 
label data or have private devices that provide measurements about spatially distributed phenomena. 

\begin{figure}[!h]
\centering
\includegraphics[width=0.8\columnwidth]{./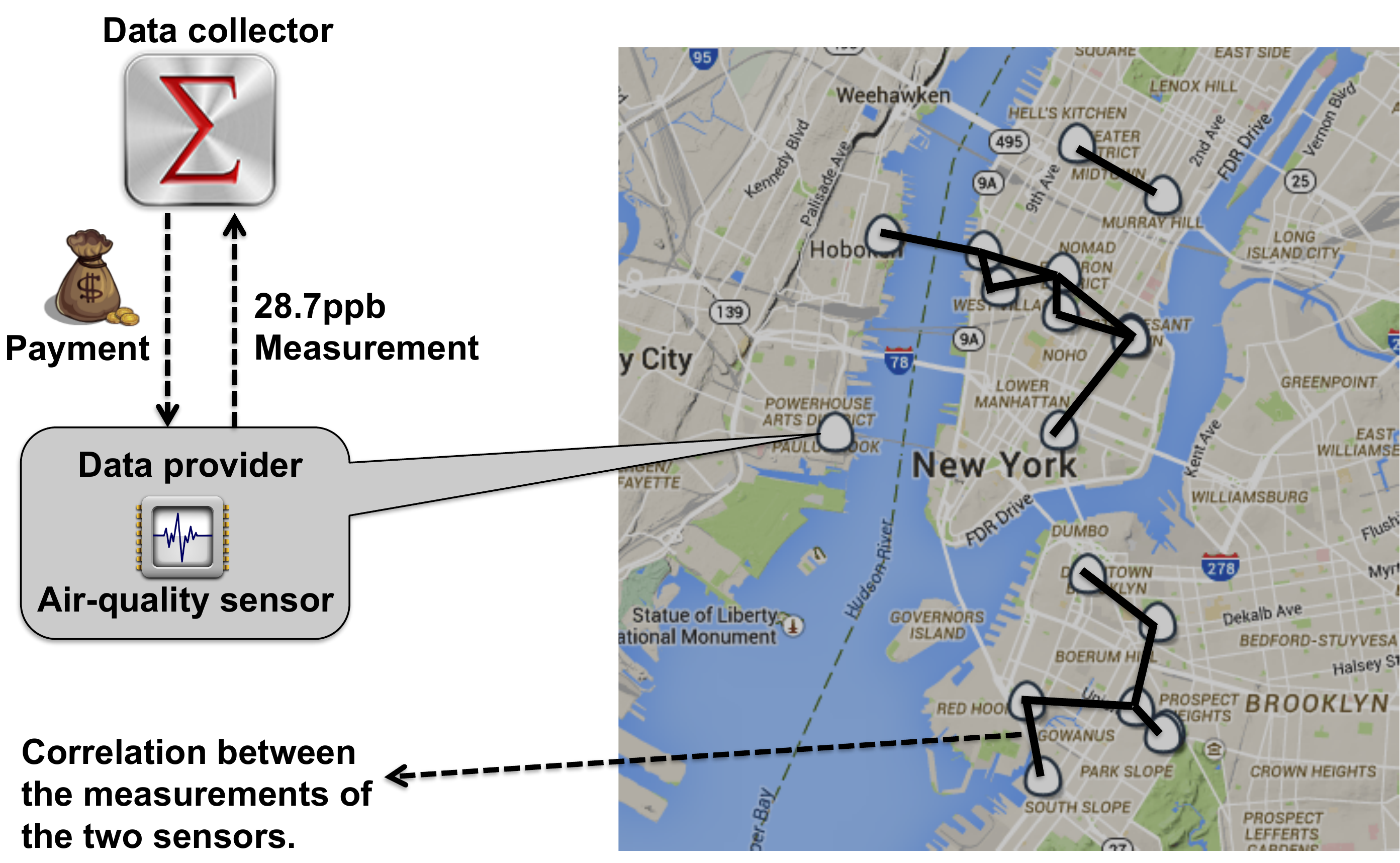}
\caption[Fig:sensing:scenario]{An example of crowdsourcing with incentives: Crowd-sensors ({\em air-quality eggs}\footnotemark[1])
report air-quality measurements and a data collector rewards them with monetary payments. 
An edge (black line) indicates that there is a sufficient correlation between the measurements of two sensors to verify accuracy. }
\label{fig:sensing}
\end{figure}

One of the most important aspects of data is its accuracy, which 
can only be established if data providers (e.g. crowd-participants) report accurate information. To incentivize accurate reporting, 
a data collector can provide incentives that compensate data providers for their effort. 
In its simplest form, this type of data elicitation process can be modeled as a three step protocol:
\begin{itemize}
\item Data providers acquire accurate data experiencing a cost of effort; 
\item Data providers report the acquired data to a data collector;
\item The data collector pays to the data providers a value that compensates for the cost of effort. 
\end{itemize}
An example scenario is shown in Figure \ref{fig:sensing}.\footnotetext[1]{https://airqualityegg.wickeddevice.com}

The problem, however, arises if the data providers are susceptible to moral hazard, that is, deviation to reporting {\em heuristically} without obtaining 
the data in the first place. In fact, such a behavior is expected for a rational participant who aims to maximize their utility, since heuristic reporting typically
carries no cost of effort. To avoid this problem, the data collector can design payment functions $\tau$ that are dependent on the accuracy of the reported 
information, for example, by installing random spot-checks that validate some of the reports \cite{DBLP:journals/corr/GaoWL16}. 
While this approach has often been used in standard micro-task crowdsourcing, it is often too costly to apply it in a more complex elicitation setting. Consider, for 
example, a crowd-sensing scenario shown in Figure \ref{fig:sensing}, where sensors measure spatially distributed phenomenon that is highly localized. 
To apply a spot-checking procedure, the data collector would need to have mobile sensors that would change their locations at each time-step. Furthermore, due to 
the localized nature of the measured phenomenon, the density of the spot-check sensor network has to be relatively large. 

Instead of evaluating data providers against trusted reports,  
\citet{DG:13}, \citet{JF:11}, \citet{RF:16b}, \citet{Shnayder:2016:EC}, \citet{DBLP:conf/aaai/WitkowskiAUK17}, and \citet{A:17}, 
propose {\em peer-prediction} mechanisms for incentivizing distributed 
information sources. Peer-prediction mechanisms reward data providers by measuring consistency among 
their reports---thus, if a data provider believes that others are honest, she is also incentivized to report truthfully.\footnote{Peer-predictions are in general susceptible to collusion, but in many cases one can establish relatively strong incentive properties for a wide variety of reporting 
strategies \cite{KS:16b}. We do not focus on collusion resistance, so we use standard peer-predictions in our setting.
}  
The most important condition to hold when applying 
a peer-prediction mechanism is that a data provider and her peer have correlated private information. Furthermore, this correlation, when expressed through expected 
payments, should be greater than the cost of effort.  The latter property can always be achieved by scaling, provided that a considered peer-prediction method provides strict incentives for truthfulness. 
The scaling approach, however, neglects potential budget concerns that are important when collecting large data sets.  

\subsection{Overview of Our Approach}\label{sec.introduction.overview}

We, therefore, focus on the limited budget concern in a distributed data collection process that uses peer-prediction incentives. There are two important aspects to this problem:
\begin{itemize}
\item which data providers to select given that we only have a limited budget to spend on incentives---thus, only those data providers who received incentives can be considered to be reliable;
\item how to ensure that all of the selected data providers have a proper peer---this constrains the selection problem to always include a proper peer of each data provider that is to be selected.
\end{itemize}  
To quantify the usefulness of each data provider, we adopt a {\em submodular} utility function, which can, for example, measure the information gain of the data collector for obtaining the reports of the selected data-providers. We will insist that each data provider can be scored against a peer report with resulting expected payment being greater than a given threshold. Furthermore, the total expected payment should be bounded by a budget, while a data provider should be always scored against the best peer among the selected data providers. Our main contributions are:
\begin{itemize}
\item A formal model of information gathering with budget and peer-prediction constraints that is based on submodular maximization.   
\item Showing that the studied optimization problem is hard to approximate within a constant factor independent of the properties of the applied payment function.
\item Novel algorithms for maximizing submodular functions with peer-prediction constraints that have provable guarantees for given topological and analytical properties of payments.
\item Experimental evaluation of the proposed algorithms on a crowd-sensing test-bed. 
\end{itemize}
Notice that we do not focus on a particular peer-prediction mechanism, but rather we allow a wide range of possible mechanisms (those that are robust in terms of the number of peers and produce bounded expected payments); thus, we complement the prior work on peer-predictions by examining orthogonal aspects of elicitation without direct verification. We provide the proofs to our formal claims in the extended version of the paper \cite{RSKF:17}.

\section{Problem Statement}\label{sec.model}

We now formalize the problem addressed in this paper. We model data providers as nodes in a graph, whereas the underlying 
peer-prediction dependencies are modeled via edges whose weights are defined by the expected payments. 
The overall goal is to select a set of nodes that maximize a submodular utility function, while satisfying the constraint 
that the cost of the data collector (i.e., the total expected payment to nodes) 
is within a predefined budget. The following subsections provide more precise modeling details.  

\subsection{Set of Nodes and the Utility Function}
We consider a set of nodes (e.g., a population of people or sensors deployed in a city) denoted by set $\groundSet=\{v_1, v_2,\dots,v_{|\groundSet|}\}$, of size $|\groundSet|$. Hereafter, we denote a generic node by $v$. We associate a function over the set of nodes $\funcf:2^\groundSet \rightarrow \R_{\ge 0}$ that quantifies their utility (e.g., informativeness).  That is, given a set of selected nodes $\selectSet$, the utility achieved from this set is equal to $\funcf(\selectSet)$. Furthermore, for given  set $\selectSet$, and a node $v \in \groundSet \setminus \selectSet$, we define the marginal utility of adding $v$ to $S$ as follows:
\begin{align}
\funcf(v|\selectSet) = \funcf(\selectSet \cup \{v\}) -  \funcf(\selectSet) \label{eq.marginalgain}.
\end{align}

Here, the function $\funcf$ is assumed to be \emph{submodular} and \emph{monotone}. Submodularity is an intuitive notion of diminishing returns, stating that, for any sets $\selectSet \subseteq \selectSet' \subseteq \groundSet$, and any given node $v \notin \selectSet'$, it holds that $f(v|\selectSet) \geq f(v|\selectSet')$. 
Monotonicity requires that the function $\funcf$ increases as we add more elements to set $S$. That is, for any sets $\selectSet \subseteq \selectSet' \subseteq \groundSet$, it holds that $f(\selectSet) \le f(\selectSet')$. These conditions are relatively general, and are satisfied by many realistic, as well as complex utility functions for information gathering \cite{krause2011submodularity,2012-survey_krause_submodular,singla13incentives,singla14stochasticPrivacy,tschiatschek17ordered}. W.l.o.g., we assume that function $f$ is normalized, i.e., $\funcf(\emptyset) = 0$.
 
\subsection{Peer-prediction Constraints (PPC)}
The nodes in general exhibit dependencies with other nodes. We consider a particular form of constraints that is associated with information elicitation via {\em peer-prediction mechanisms} \cite{MRZ:05}. A canonical peer-prediction $\pay: \groundSet \times \groundSet \rightarrow \R$ scores the information reported by node $v$ using the information of a peer node $v_p$. Mechanism $\pay(v,v_p)$ is said to be {\em proper} if node $v$'s best response to accurate reporting of node $v_p$ is to report accurately, where the quality of a response is measured in terms of node $v$'s expected payoff over possible (accurate) reports of node $v_p$.\footnote{Therefore, {\em properness} is here defined as {\em Bayes-Nash incentive compatibility} in game-theoretic sense.} We denote node $v$'s expected payoff for accurate reporting by $\E(\pay(v,v_p))$. To establish the properness of $\pay$, one needs to ensure that peer $v_p$ provides statistically correlated information to that of node $v$,
so that the expected payoff $\E(\pay(v,v_p))$ is strictly greater than the cost of accurate reporting (which models, for example, participants' effort exertion).  
Therefore, node $v$ has only a limited number of peers defined as nodes that lead to the expected payoff $\E(\pay(v,v_p)) \ge \pay_{min}$, where $\tau_{min} > 0$ is a problem specific threshold dependent on the cost of accurate reporting. We will further require that the same holds for node $v$'s peers, i.e., $\E(\pay(v_p,v)) \ge \pay_{min}$, and assume that mechanism $\pay$ provides bounded payments, so that $\E(\pay(v,v_p)) \le \pay_{max}$.
Notice that as $\pay_{min}$ increases, a node is expected to have a smaller number of peers, which makes the problem of selecting an optimal set of nodes more constrained. In Section 'Experimental Evaluation', we confirm this observation by showing the drop in the obtained utility.   

{\bf Example: Output Agreement (OA)}. Arguably the simplest peer prediction method is the output agreement of \citet{vonAhn:2004:LIC:985692.985733}, which gives a strictly positive payment only for matching reports. In our experiments, reported information can in general take real values. In that case, as explained by \citet{DBLP:conf/hcomp/WaggonerC14}, the OA mechanism can be defined as:
\begin{align}\label{eq_OA}
\pay(v, v_p) = 1-d(v,v_p)^2,
\end{align}
where $d$ is Euclidian distance between reported values of $v$ and $v_p$. Note that more complex designs are also allowed by our framework, such as the one proposed in \cite{faltings2013}. For more information on the properties of different minimal peer-predictions and their relationships, we refer the reader to \cite{FW:16}.  

With this in mind, we can model the dependencies among nodes using an undirected graph $\graph=(\groundSet,\edges)$,   
where edges are defined as $\edges = \{\{v, u \}: v, u\in \groundSet, v \ne u, \E(\pay(v,u)) \geq \pay_{min}, \E(\pay(u,v)) \geq \pay_{min} \}$.
We require that each node in a selected set $S$ has a neighboring node $\graph$ that is also in $S$, which implies that we can properly evaluate the reported information of each selected node.
We denote the set of neighboring nodes to node $v$ as $\neighbor_v$, i.e., $\neighbor_v = \{u \in  \groundSet : \{v, u\} \in \edges\}$.
W.l.o.g. we can assume that every node in $\graph$ has at least one peer node $v_p$, i.e., $|\neighbor_v| \ge 1$. Namely, nodes that do not have a peer cannot be incentivized to report accurately in our setting, so they bring $0$ utility in terms of $f$. Finally, let us denote by $\omega$ the maximum number of peers that a node in graph $\graph$ has, i.e.,
$\omega(\graph) = \max_{v \in V} |\neighbor_v|$.
 
\subsection{Cost of Incentivizing Accuracy}
Given a selected set of nodes $S$, an information elicitation procedure needs to spend a certain amount of budget, hereafter denoted by $B$, to incentivizing accurate reporting. To quantify the cost of accurate elicitation, one needs to specify a peer selection procedure when a node $v$ has multiple peers. We take the approach of selecting the best peer, that is the peer that has the most correlated information to that of the considered node according to the expected payoffs---this leads to the strongest incentives in terms of the separation between the expected payoffs for accurate and inaccurate reporting. With this choice of peer selection procedure, we can define the cost of selecting nodes $S$ as function 
$c:2^V_\feasible \rightarrow \R$:
\begin{align}\label{eq_cost_function}
c(S) = \sum_{v \in S} \max_{v_p \in S \cap \neighbor_v} \E(\pay(v, v_p)).
\end{align}
Here, $2^V_\feasible$ contains only sets $S$ such that each node $v \in S$ has a peer node $v_p \in S$,
which makes $c$ well defined. 

\subsection{Optimization Problem}
Our goal is to select a {\em feasible} set $\selectSet \in 2^\groundSet_\feasible$ that maximizes the utility $\funcf(\selectSet)$ given budget $\budget$, i.e., $c(\selectSet) \leq \budget$. More precisely, the budget denotes the total expected payment that a data collector is willing to provide for incentivizing accurate reporting.  
We therefore pose the following optimization problem:
\begin{align}
\selectSet^* = \argmax_{\selectSet \in 2^\groundSet_\feasible \text{ s.t. } c(\selectSet) \leq \budget} \funcf(\selectSet) \label{eq.opt}.
\end{align}
Ignoring the computational constraints, we denote the optimal solution to this problem as $\opt$.

\section{Methodology}\label{sec.methodology}

Instead of operating directly on optimization problem \eqref{eq.opt}, we reformulate it so that the budget constraint 
is expressed through a cost function defined over all subsets of nodes $2^V$, not just 
the feasible set $2^V_{\phi}$. 
We first show that the new optimization problem is equivalent to \eqref{eq.opt}. Unfortunately, it is hard to approximate 
without any dependency on the structure of the cost function. We then relax it to an optimization problem that uses a 
modular approximation to the cost function, but operates with reduced budget to satisfy the original budget constraint. 
This relaxation is the basis for our algorithms developed in the next section, and is sound if the cost function of 
the original problem has certain topological and analytical constraints. The following subsections explain our 
methodology in more details. 

\subsection{Expansion of the Cost Function}

We start by expanding the domain of cost function $c$ to power-set $2^V$, which will provide us with better 
insights on the computational complexity of the original problem. In particular, we consider the following 
expansion 
$c_e: 2^V \rightarrow \R$:
\begin{align}\label{eq_cost_e} 
c_e(S) = c(S_p) + \sum_{v \in S \backslash S_p} \min_{v_p \in V \cap \neighbor_v}  \E(\pay(v, v_p)),  
\end{align}
where $S_p$ is a set of all nodes in $S$ who also have a peer in $S$, i.e., 
$S_p = \{v \in S : \exists v_p \in S \text{ s.t. } v_p \in \neighbor_v \ \}$. 
In other words, cost function $c_e$ acts as if all the nodes in $S$ who have a peer in $S$ are rewarded as usual, while those that
do not have a peer in $S$ are rewarded with the expected payoff they obtain when scored against the worst peer. Notice that 
$c_e(S) = c(S)$ for all $S \in 2^V_{\phi}$, which makes the expansion sound. 
We denote by $c_e(v|S)$ the marginal increase of cost $c_e$ for adding an element $v$ to $S$, i.e., $c_e(v|S) = 
c_e(S \cup \{v\}) - c_e(S)$. We establish the monotonicity of cost function $c_e$ with the following lemma; notice, however, 
that the cost function is not necessarily sub/super-modular. 

\begin{lemma}\label{lm_monotone}
Cost function $c_e$ defined by \eqref{eq_cost_e} is monotone. Furthermore: $c_e(\{v\}) \le c_e(v|S)$, for all $S \in 2^V \backslash \{ v \}$.   
\end{lemma}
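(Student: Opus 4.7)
The plan is to analyze what happens to $c_e$ when a single node $u \notin S$ is added to $S$, and then iterate for monotonicity. The key subtlety is that $u$ may be a neighbor of previously unpaired nodes in $S$, which then become paired in $S \cup \{u\}$. So let $W := \{w \in S \setminus S_p : u \in \neighbor_w\}$ denote these newly paired nodes. A direct check gives $(S \cup \{u\})_p = S_p \cup W \cup \{u\}$ when $u$ has at least one neighbor in $S$, and $(S \cup \{u\})_p = S_p$ otherwise.

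I would split into two cases. Case (a): $u$ has no neighbor in $S$. Then $u$ is added as an unpaired node and $S_p$ is unchanged, giving $c_e(u|S) = \min_{v_p \in V \cap \neighbor_u} \E(\pay(u,v_p)) \ge \pay_{min} > 0$. Case (b): $u$ has at least one neighbor in $S$. Writing out both $c_e(S \cup \{u\})$ and $c_e(S)$ using the decomposition above and cancelling the identical contributions over $(S \setminus S_p) \setminus W$ yields
\[
c_e(u|S) = \bigl[c(S_p \cup W \cup \{u\}) - c(S_p)\bigr] - \sum_{w \in W} \min_{v_p \in V \cap \neighbor_w} \E(\pay(w,v_p)).
\]
Expanding the bracketed difference by contributing node: each $v \in S_p$ is nonnegative since its max is over a larger peer set; each $w \in W$ contributes at least $\E(\pay(w,u)) \ge \min_{v_p \in V \cap \neighbor_w} \E(\pay(w,v_p))$, because $u \in (S_p \cup W \cup \{u\}) \cap \neighbor_w$; and $u$ itself contributes $\max_{v_p \in S \cap \neighbor_u} \E(\pay(u,v_p)) \ge \pay_{min}$. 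The $W$-contributions thus absorb the $W$-minimum subtraction exactly, leaving $c_e(u|S) \ge \max_{v_p \in S \cap \neighbor_u} \E(\pay(u,v_p)) \ge \pay_{min} > 0$. Monotonicity of $c_e$ follows by iterating this one-element argument along any chain from $S$ to a superset.

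For the second claim, note that $c_e(\{v\}) = c(\emptyset) + \min_{v_p \in V \cap \neighbor_v} \E(\pay(v,v_p)) = \min_{v_p \in V \cap \neighbor_v} \E(\pay(v,v_p))$, since $v$ has no peer in $\{v\}$. In Case (a) we get $c_e(v|S) = c_e(\{v\})$ exactly; in Case (b) the bound above gives $c_e(v|S) \ge \max_{v_p \in S \cap \neighbor_v} \E(\pay(v,v_p)) \ge \min_{v_p \in V \cap \neighbor_v} \E(\pay(v,v_p)) = c_e(\{v\})$, using that the max is taken over a non-empty subset of $V \cap \neighbor_v$. The main obstacle is the three-way book-keeping in Case (b): $S_p$ gains both $u$ and all of $W$, while the unpaired sum loses exactly those $W$-nodes, and one must verify that the new max-contributions from $W$ compensate for the lost min-contributions; once this decomposition is in place, the monotonicity of $\max$ and the bound $\E(\pay) \ge \pay_{min}$ settle both parts.
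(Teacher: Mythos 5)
Your proof is correct and uses essentially the same decomposition as the paper's: you partition the contributions into previously-paired nodes ($S_p$), newly-paired nodes ($W$, which the paper denotes $(S \cap S_p^v) \setminus S_p$), and the new node itself, then cancel the $W$-contributions against the lost min-terms. The only cosmetic differences are your explicit two-case split and the slightly tighter $\max_{v_p \in S \cap \neighbor_v}$ bound in Case (b), where the paper settles for $\min_{v_p \in V \cap \neighbor_v}$.
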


\subsection{Hardness Result}

To prove the complexity of our initial problem, we adapt optimization problem \eqref{eq.opt} to use the extended cost function $c_e$. 
In particular, we consider the optimization problem defined as: 
\begin{align}
\selectSet^* = \argmax_{\selectSet \in 2^\groundSet_\feasible \text{ s.t. } c_e(\selectSet) \leq \budget} \funcf(\selectSet) \label{eq.opt_e}
\end{align}
Clearly, any feasible solution to the problem \eqref{eq.opt_e} is also a feasible solution to the original problem due to the constraint 
$\selectSet \in 2^\groundSet_\feasible$, while the optimality alignment is ensured by having the same objective value.  
Now, to show the hardness result for approximating $\opt$, we reduce the maximum clique problem to optimization 
problem \eqref{eq.opt_e} in a computationally efficient way, thus, obtaining: 
  
\begin{theorem}\label{thm_inapproximability}
For any $\epsilon > 0$, it is NP-hard to find a solution $S$ to optimization problem \eqref{eq.opt_e} (and thus \eqref{eq.opt}) 
such that $ \frac{f(S)}{f(\opt)} \ge \frac{1}{|V|^{1-\epsilon}}$.
\end{theorem}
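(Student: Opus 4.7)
My plan is to establish inapproximability via a gap-preserving reduction from Maximum Clique, which is NP-hard to approximate within $n^{1-\epsilon}$ for every $\epsilon>0$ by the H\aa stad/Zuckerman result. The construction will have $|V|$ equal to the number $n$ of vertices of the input clique instance, so the ratio $|V|^{1-\epsilon}$ matches the clique-inapproximability exponent directly. Hardness for the original problem \eqref{eq.opt} then follows from hardness for \eqref{eq.opt_e}, as already noted in the excerpt: every feasible solution of \eqref{eq.opt_e} is feasible for \eqref{eq.opt} with identical objective.

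Given a Maximum Clique instance $H=(V_H,E_H)$ on $n$ vertices, I would construct an instance of \eqref{eq.opt_e} as follows. Let $V:=V_H$ and take the peer-prediction graph $\graph$ to be the complete graph on $V$, which is achievable by choosing $\pay_{min}$ small enough that every pair qualifies as a valid peer pair. Define
\begin{align*}
\E(\pay(u,v)) = \begin{cases} 1 & \text{if } \{u,v\}\in E_H,\\ M & \text{otherwise,}\end{cases}
\end{align*}
with $M:=n+1$. Take $\funcf(S):=|S|$, which is monotone, normalized, and modular (hence submodular), and set the budget $\budget:=n$.

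The key step is to verify that the feasible solutions of \eqref{eq.opt_e} under this construction are exactly the cliques of $H$ of size at least $2$. Because $\graph$ is complete, any $S$ with $|S|\ge 2$ lies in $2^V_{\feasible}$, so $c_e(S)=c(S)=\sum_{v\in S}\max_{u\in S\setminus\{v\}}\E(\pay(v,u))$. For a clique $S$ every payment in the sum equals $1$, whence $c(S)=|S|\le n=\budget$. For a non-clique $S$, pick any non-edge $\{u,v\}\subseteq S$ of $H$; both $u$ and $v$ then see a peer in $S$ with payment $M$, so each contributes at least $M$ to the sum, giving $c(S)\ge 2M>n$. Hence $\opt$ is a maximum clique of $H$ and $\funcf(\opt)$ equals its clique number. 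Any algorithm returning $S$ with $\funcf(S)/\funcf(\opt)\ge 1/|V|^{1-\epsilon}$ would therefore output a clique of $H$ of the same ratio, contradicting Maximum Clique inapproximability.

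The main difficulty will be engineering the cost function so that the budget cleanly separates cliques from non-cliques while respecting all structural assumptions of the model (positivity and boundedness of the payments, the neighbor structure of $\graph$, and feasibility via $2^V_{\feasible}$). Once the $\max$-of-payments structure is exploited to amplify a single ``bad'' peer into a contribution of $M$ for \emph{both} of its endpoints, the separation becomes the clean factor $2M/n>2$; the remaining verifications, that $\funcf$ is monotone submodular, that the reduction is polynomial-time, and that the edgeless case of $H$ (clique number $1$) is vacuous for the statement, are routine.
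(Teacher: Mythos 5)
Your proof is correct and follows essentially the same route as the paper: reduce from Maximum Clique by making the peer-prediction graph complete, assigning payment $\pay_{min}$ (your $1$) to edges of $H$ and $\pay_{max}$ (your $M=n+1$) to non-edges, taking $f(S)=|S|$ and budget $B=n\cdot\pay_{min}$, so that the $\max$-over-peers cost of any non-clique overshoots the budget while every clique fits, and then invoking the H{\aa}stad inapproximability of Maximum Clique. Your explicit treatment of singleton cliques and the edgeless instance is a small cleanup the paper glosses over, but it does not change the argument in any material way.
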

\begin{proof}
Consider an arbitrary undirected unweighted graph $G' = (V, E')$ for which we wish to compute the maximum clique. 
To reduce the maximum clique problem to \eqref{eq.opt_e}:
1) define function $f$ as $f(S) = |S|$, which is clearly monotone and submodular;    
2) define payment function as: $\tau(v, v_p) = \tau(v_p, v) = \pay_{max}$ if $(v, v_p) \notin E'$, and $\tau(v, v_p) = \tau(v_p, v) = \pay_{min}$ otherwise; 
3) set budget $\budget$ to $\budget = |V| \cdot \pay_{min}$;
4) and set $\pay_{max} > \budget$.
Notice that such an arrangement induces a fully connected graph $\graph$. Furthermore, we defined deterministic payment functions $\tau(v, v_p)$ and $\tau(v_p, v)$, 
but one can use $\E(\tau(v, v_p))$ and $\E(\tau(v_p, v))$ instead.
Points 2 and 4 ensure that any solution to optimization problem \eqref{eq.opt_e} is a clique in graph $G'$; otherwise, the budget constraints would be violated 
in solving \eqref{eq.opt_e}. 
Likewise, points 2 and 3 ensure that any clique is permitted as a potential solution w.r.t. the budget constraint. 
Finally, point 1 ensures that we search for a clique with the maximum number of vertices. Since the reduction is computationally 
efficient (polynomial in the graph size, i.e. $|V|$ and $\edges$), optimization \eqref{eq.opt} is at least as hard as the maximum clique problem.  Using the fact 
that the maximum clique problem is hard to approximate within factor $\frac{1}{|V|^{1-\epsilon}}$ \cite{Hastad1999}, we obtain the claim.
\end{proof}

\subsection{Structural Properties of the Cost Function} 
To cope with the computational hardness of the problem at hand, we identify two structural properties of cost function $c_e$ 
(or equivalently, the structural properties of payment function $\tau$). The first one is related to {\em topological} properties of graph 
$\graph$, and can be quantified with the maximum number of peers that a node in the graph can have. As explained earlier in this 
section, we denote this number by $\omega$. 

The second property is similar to the notion of {\em curvature} of a submodular function \cite{Iyer:2013:COA:2999792.2999918}, but now defined over cost 
function $c_e$ that is not necessarily sub/super-modular. In particular, we define the {\em slope} $\alpha$ of cost 
function $c_e$ as: 
\begin{align*}
\alpha = 1-\min_{v \in V, S \in 2^V \backslash \{ v \}} \frac{c_e(\{v\})}{c_e(v|S)}.
\end{align*}
The slope of cost function $c_e$, as defined above, measures how much 
marginal gains of $c_e$ change as we add more to initially empty set 
of selected nodes.\footnote{That is, $\alpha$ quantifies
the maximum increase in $c_e$ for adding a node $v$ (see Lemma \ref{lm_monotone}).} 
Intuitively, it measures the deviation of $c_e$ from modularity. 
A specific case of our interest is when $\alpha = 0$, which indicates that $c_m$ 
is modular and, thus, can be decomposed into a sum of costs $c_m$ dependent only on one vertex, 
i.e., $\sum_{v \in S} c_m(v)$. In the next subsection, we discuss how to utilize modular approximations of $c_e$ 
when $c_e$ itself is not modular. First, let us upper bound the slope $\alpha$ using the fact that payments are bounded. 

\begin{lemma}\label{lm_slope}
The slope of cost function $c_e$ is upper-bounded by $\alpha  \le 1- \frac{\pay_{min}}{\omega \cdot \pay_{max}}$.
\end{lemma}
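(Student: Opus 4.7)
The plan is to establish the pointwise inequality $c_e(\{v\})/c_e(v|S) \ge \pay_{min}/(\omega \pay_{max})$ for every node $v \in V$ and every $S \subseteq V$ with $v \notin S$; taking the infimum over such pairs and substituting into the definition of $\alpha$ then yields the claimed upper bound directly.

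First I would compute the numerator exactly. Since $v \notin \neighbor_v$, the singleton $\{v\}$ contains no peer of $v$, so $\{v\}_p = \emptyset$ and expanding definition~\eqref{eq_cost_e} reduces $c_e(\{v\})$ to $\min_{v_p \in V \cap \neighbor_v} \E(\pay(v,v_p))$. Every $v_p \in \neighbor_v$ is, by construction of the edge set $E$, a peer paying at least $\pay_{min}$ in expectation, so this minimum is at least $\pay_{min}$, giving $c_e(\{v\}) \ge \pay_{min}$.

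The real work lies in upper-bounding $c_e(v|S) = c_e(S \cup \{v\}) - c_e(S)$. I would expand both $c_e$ terms node by node and observe that any $u \in S$ with $u \notin \neighbor_v$ is completely unaffected by the insertion: its peer set inside $S$ does not change, so its membership in $S_p$ is preserved, and both its max-over-$S$ and its min-over-$V$ contribution terms remain the same. Hence only $v$ itself and the at most $|\neighbor_v \cap S| \le \omega$ neighbors of $v$ in $S$ can contribute to the difference. Each such contribution in $c_e$ is a single expected payment, uniformly bounded above by $\pay_{max}$, and Lemma~\ref{lm_monotone} guarantees that every per-node change is nonnegative. Aggregating these at most $\omega$ nonnegative, $\pay_{max}$-bounded pieces, and absorbing $v$'s own new term into the same budget via the $\pay_{max}$ cap, yields $c_e(v|S) \le \omega \pay_{max}$.

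Dividing the two bounds gives the desired ratio, and the lemma follows. The main obstacle is the denominator step: one must first confirm that non-neighbors of $v$ truly drop out of the node-by-node decomposition, and then carry out the accounting tightly enough that $v$'s own new term and the at most $\omega$ neighbor changes jointly fit within the $\omega\pay_{max}$ budget. The essential ingredients are the peer-prediction graph structure (which isolates non-neighbors from any node's best-peer evaluation) and the uniform payment bound $\pay_{max}$, combined with the monotonicity from Lemma~\ref{lm_monotone} that prevents any per-node change from being negative and hence from canceling against the positive contributions.
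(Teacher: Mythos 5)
Your opening step is sound: $\{v\}_p=\emptyset$, so $c_e(\{v\})=c_m(v)\ge\pay_{min}$, and indeed only $v$ itself and the neighbors of $v$ that lie in $S$ can have their terms move when $v$ is inserted. The gap is in the closing accounting. When you decompose $c_e(v\mid S)$ node by node, $v$'s own new term is bounded by $\pay_{max}$, and the \emph{change} in each neighbor $u\in S\cap\neighbor_v$ (not the term itself) is bounded by $\pay_{max}-\pay_{min}$, since $u$'s old contribution is at least $\pay_{min}$ and its new one is at most $\pay_{max}$. Summing these pieces gives
\begin{align*}
c_e(v\mid S)\;\le\;\pay_{max} + |S\cap\neighbor_v|\cdot(\pay_{max}-\pay_{min})\;\le\;(\omega+1)\,\pay_{max}-\omega\,\pay_{min},
\end{align*}
which is below $\omega\,\pay_{max}$ only when $\pay_{max}\le\omega\,\pay_{min}$. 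Your ``absorbing'' move, which tries to fit $v$'s new term and $\omega$ neighbor changes jointly inside an $\omega\,\pay_{max}$ budget, is exactly the step that is not justified, and it can fail: take $\omega=2$, let $v$ have two neighbors $u_1,u_2$, each of which has a second peer that pays it $\pay_{min}$; put $S=\{u_1,u_2\}$ with $\E(\pay(u_i,v))=\pay_{max}$ and $\E(\pay(v,u_1))=\pay_{max}$. Then $c_e(S)=2\pay_{min}$, $c_e(S\cup\{v\})=3\pay_{max}$, so $c_e(v\mid S)=3\pay_{max}-2\pay_{min}$, which exceeds $\omega\,\pay_{max}=2\pay_{max}$ once $\pay_{max}>2\pay_{min}$. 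Also note that Lemma~\ref{lm_monotone} asserts monotonicity of $c_e$ as a set function, not nonnegativity of the per-node changes; the latter is true, but for the separate reason that inserting $v$ can only enlarge the set over which each neighbor's $\max$ is taken (or switch a $\min$ to a larger $\max$).

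The paper's own argument is not a node-by-node decomposition; it sketches a worst-case payment configuration and writes the denominator as $\pay_{max}+(\omega-1)(\pay_{max}-\pay_{min})$, i.e.\ it counts only $\omega-1$ neighbor changes in addition to $v$'s $\pay_{max}$ term, without explaining where the $-1$ comes from, and then relaxes to $\omega\,\pay_{max}$. Your term-by-term route is in principle the more rigorous one, but carried out honestly it yields $\omega$ neighbor changes, not $\omega-1$, and therefore the slightly weaker constant $(\omega+1)\pay_{max}-\omega\pay_{min}$. To obtain the lemma exactly as stated you would need an additional structural argument ruling out the configuration above, or else you should report the weaker bound $\alpha\le 1-\pay_{min}/\bigl((\omega+1)\pay_{max}-\omega\pay_{min}\bigr)$, which your decomposition actually supports.
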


\subsection{Relaxed Optimization Problem}

To make use of the structural constraints of cost function $c_e$, let us consider a relaxed version of optimization problem \eqref{eq.opt_e} with budget constraints defined via a {\em modular lower bound} to cost function $c_e$, denoted by $c_M$. As we show in the next section, for such a relaxation, one can develop a greedy approach that has provable approximation guarantees on the quality of the obtained solution relative to $\opt$. More precisely, consider the modular function 
$c_M:2^V \rightarrow \R$ defined via a cost function $c_m: V \rightarrow \R$:
\begin{align}\label{eq_modular_approx}
&c_M(S) = \sum_{v \in S} c_m(v) \\
\textnormal{ where } &c_m(v) = \min_{v_p \in \neighbor_v} \E(\pay(v, v_p)). \nonumber
\end{align} 
Clearly, $c_M(S)$ lower bounds $c_e(S)$ as it calculates the expected payoffs of nodes in $S$ when they are  scored against their worst peers (not necessarily in $S$). Now, we relax optimization problem \eqref{eq.opt_e} to:
\begin{align}
\selectSet^* = \argmax_{\selectSet \in 2^\groundSet_\feasible \text{ s.t. } c_M(\selectSet) \leq \budget'} \funcf(\selectSet) \label{eq.opt_2}.
\end{align}
In order to make the relaxation sound, any selected set $\selectSet^*$ in problem \eqref{eq.opt_2} should also be feasible in problem \eqref{eq.opt_e} (and thus \eqref{eq.opt}). 
We can ensure this by reducing the available budget, i.e., by making $B'$ appropriately smaller than $B$. Using the slope of cost $c_e$, we can obtain that the following budget 
reduction satisfies our requirement.
\begin{lemma}\label{lm_reduced_budget}
Any feasible solution $S$ to optimization problem $\eqref{eq.opt_2}$ is also a feasible solution to optimization problem \eqref{eq.opt_e} (and thus \eqref{eq.opt}) for $B' \le (1-\alpha) \cdot B$, where $\alpha$ is the slope of cost function $c_e$. 
\end{lemma}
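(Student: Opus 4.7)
The plan is to show that the extended cost $c_e(S)$ is at most $\frac{c_M(S)}{1-\alpha}$ for every $S \in 2^V_\feasible$, after which the budget reduction $B' \le (1-\alpha)B$ immediately forces $c_e(S) \le B$. Feasibility in the sense of membership in $2^V_\feasible$ is already built into both \eqref{eq.opt_e} and \eqref{eq.opt_2}, so the only thing to check is the budget constraint.

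First I would unpack the slope. Since $1-\alpha = \min_{v, S} \frac{c_e(\{v\})}{c_e(v|S)}$, we have $c_e(v|S) \le \frac{c_e(\{v\})}{1-\alpha}$ for every $v \in V$ and every $S \in 2^V \setminus \{v\}$; note that $1-\alpha \ge \frac{\pay_{min}}{\omega \cdot \pay_{max}} > 0$ by Lemma \ref{lm_slope}, so the division is legitimate. Next I would observe that the singleton cost agrees with the modular proxy: when $S=\{v\}$, the set $S_p$ in \eqref{eq_cost_e} is empty (since $v$ has no peer inside $S$), so $c_e(\{v\}) = \min_{v_p \in \neighbor_v} \E(\pay(v,v_p)) = c_m(v)$. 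Combining the two gives the pointwise bound $c_e(v|S) \le \frac{c_m(v)}{1-\alpha}$.

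Then I would telescope. Fix any ordering $v_1, \ldots, v_{|S|}$ of $S$ and set $S_0 = \emptyset$, $S_i = S_{i-1} \cup \{v_i\}$. Using monotonicity of $c_e$ from Lemma \ref{lm_monotone}, the telescoping identity
\begin{equation*}
c_e(S) \;=\; \sum_{i=1}^{|S|} c_e(v_i \mid S_{i-1}) \;\le\; \sum_{i=1}^{|S|} \frac{c_m(v_i)}{1-\alpha} \;=\; \frac{c_M(S)}{1-\alpha}
\end{equation*}
is immediate. Finally, if $S$ is feasible for \eqref{eq.opt_2}, then $c_M(S) \le B' \le (1-\alpha) B$, so $c_e(S) \le B$; together with $S \in 2^V_\feasible$ (inherited directly from \eqref{eq.opt_2}), this shows $S$ is feasible for \eqref{eq.opt_e} and hence for \eqref{eq.opt}.

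There is no real obstacle here: once the identity $c_e(\{v\}) = c_m(v)$ is recognized and the defining inequality of $\alpha$ is inverted, the argument is just a one-line telescoping. The only subtlety worth flagging explicitly is that the bound $c_e(v|S) \le c_e(\{v\})/(1-\alpha)$ is valid for \emph{arbitrary} $S$ (not just feasible ones), which is precisely why the expansion of the cost function to all of $2^V$ in \eqref{eq_cost_e} was needed in the first place.
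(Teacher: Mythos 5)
Your proof is correct and takes essentially the same route as the paper's: both telescope $c_e(S)$ along an arbitrary ordering of $S$, bound each marginal $c_e(v_i \mid S_{i-1})$ by $c_m(v_i)/(1-\alpha)$ via the slope definition and the identity $c_e(\{v\}) = c_m(v)$, and then invoke $c_M(S) \le B' \le (1-\alpha)B$. The only cosmetic difference is direction: the paper starts from $c_M(S)$ and chains down to $(1-\alpha)\,c_e(S)$, while you start from $c_e(S)$ and chain up to $c_M(S)/(1-\alpha)$, which is the same inequality.
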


\begin{algorithm}[t!]
\nl {\bf Input}:\\
	\begin{itemize}
 		\item[$\Square$] PPC graph: $\graph(\groundSet,\edges)$\; 
 		\item[$\Square$] Utility function : $\funcf$; budget $\budget$\;
		\item[$\Square$] Cost function : $c$, slope $\alpha$, modular approx. $c_M$\;
    \end{itemize}
\nl {\bf Output}: selected set $\selectSetAlg$\;
\nl {\bf Initialize}:\\
	\begin{itemize}
 		\item[$\Square$] $t=0$; $\selectSetAlg = \emptyset$; $\text{ budget } \budget^t = (1-\alpha) \cdot \budget$ \; 
 		\item[$\Square$] $\supergroundsetalg$ $\supergroundset = \emptyset$\;
    \end{itemize}
\nl //\texttt{Create $\supergroundsetalg$ $\supergroundset$}\\
\nl	\ForEach {$v \in \groundSet$}{
\nl		$\neighbor_v \gets \{u: \{v,u\} \in \edges\}$ \;
\nl		\ForEach {$u \in  \neighbor_v$}{
\nl			$z = \{v, u\}$; $\supergroundset = \supergroundset \cup \{z\}$ \;
		}
	}
\nl //\texttt{Compute $\selectSetAlg$}\\
\nl \While { $\budget^t > 0$ }{
\nl		$z^*_t = \argmax_{z \in \supergroundset, z \backslash \selectSetAlg \ne \emptyset,  c_M(z \backslash \selectSetAlg) \le \budget^t} \frac{f(z | \selectSetAlg)}{c_M(z \backslash \selectSetAlg)}$ \;
\nl		\If{$z^*_t = NULL$}{ {\bf break}\;		}
\nl 		$\budget^{t+1} = \budget^t -c_M(z^*_t \backslash \selectSetAlg)$ \;
\nl		$\selectSetAlg = \selectSetAlg \cup z^*_t$ \;
\nl		$t=t+1$ \;
	}
\nl {\bf Output}: $\selectSetAlg$\\
\caption{Algorithm \ppcgreedyalg}
\label{alg_pcpgreedyalg} 
\end{algorithm}

\section{Algorithm}\label{sec.mechanism}

We now present a new greedy algorithm for solving the optimization problem
with peer-prediction constraints (PPC), called PPCGreedy (Algorithm \ref{alg_pcpgreedyalg}). It is similar to standard greedy approaches for submodular maximization with budget constraints (e.g., \citet{2015-hcomp_nushi_crowd_access_path}), but it additionally ensures that a tentative output $\selectSetAlg$ at a certain iteration is an element of $2^\groundSet_\feasible$. To do so, it initially constructs a set of couples $\supergroundset $ that contains all the peer pairs and selects at each iteration $t$ either a node that already has a peer in the selected set or a pair of nodes that are peers. The selection procedure makes a choice $z^*$ that maximizes the ratio between the utility gain and the cost increase, while not exceeding a given budget $B' = (1-\alpha) \cdot B$. If there are multiple choices that maximize this ratio, the selection procedure selects one of them, whereas if there is no choice that fits the budgets constraints, $z^*$ is set to $NULL$, which ends the search and outputs the current solution $\selectSetAlg$. 

\subsection{Analysis}

We will now show the main property of our algorithm: its near optimality when cost function $c_e$ has a low slope $\alpha$,
i.e., when the difference between $\pay_{max}$ and $\pay_{min}$ is small.
Notice that parameters $\pay_{max}$ and $\pay_{min}$ are controllable through our design of a peer-prediction method $\pay$ and the requirements on minimal expected payments, which implies that $\alpha$ can be tuned. For all practical reasons, it is also reasonable to assume that $\frac{(1-\alpha) \cdot B}{\tau_{max}} > 2$, which simply states that our algorithm is always able to initially select any pair of nodes.

\begin{theorem}\label{thm_main}
Let the maximal relative difference between modular costs of two peer nodes be bounded by $r$, i.e., $r \ge \max_{v \in V, v_p \in \neighbor_v}\frac{c_m(v)}{c_m(v_p)}$, and let $\gamma = \max_{v \in V}\frac{c_{m}(v)}{B'} \in (0, \frac{1}{2})$. 
Then, the output $\selectSetAlg$ of Algorithm \ref{alg_pcpgreedyalg} has the following guarantees on the utility: 
\begin{align}
f(\selectSetAlg) \ge \left (1- e^{-\frac{(1-\alpha) \cdot (1-2 \cdot \gamma)}{1+r}} \right ) \cdot f(\opt).
\end{align} 
\end{theorem}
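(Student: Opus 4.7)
The plan is to follow the classical budgeted greedy analysis for monotone submodular maximization (in the spirit of Khuller--Moss--Naor and Krause--Guestrin), adapted to three features of \ppcgreedyalg: (i) selection of couples $z \in \supergroundset$ rather than single elements; (ii) the modular surrogate $c_M$ in place of $c_e$; and (iii) the reduced budget $B' = (1-\alpha)B$. Throughout, let $\tilde{S}_t$ denote the algorithm's selection after $t$ iterations and $z_t^*$ the couple chosen at iteration $t$.

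The per-iteration ratio bound is the main analytic step. For every $v \in \opt \setminus \tilde{S}_{t-1}$, feasibility of $\opt$ supplies a peer $p(v) \in \opt \cap \neighbor_v$, so the couple $z_v = \{v, p(v)\}$ lies in $\supergroundset$. Monotonicity of $f$ gives $f(z_v \mid \tilde{S}_{t-1}) \geq f(v \mid \tilde{S}_{t-1})$, and the ratio hypothesis $c_m(p(v)) \leq r\cdot c_m(v)$ gives $c_M(z_v \setminus \tilde{S}_{t-1}) \leq c_m(v) + c_m(p(v)) \leq (1+r)\,c_m(v)$. Since a maximum of nonnegative ratios dominates any weighted average of them, and since submodularity combined with monotonicity yields $\sum_{v \in \opt \setminus \tilde{S}_{t-1}} f(v \mid \tilde{S}_{t-1}) \geq f(\opt) - f(\tilde{S}_{t-1})$, while $\sum_v c_m(v) \leq c_M(\opt) \leq c_e(\opt) = c(\opt) \leq B$, I would establish
\[
\frac{f(z_t^* \mid \tilde{S}_{t-1})}{c_M(z_t^* \setminus \tilde{S}_{t-1})} \ \geq\ \frac{f(\opt) - f(\tilde{S}_{t-1})}{(1+r)\,B}.
\]

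Telescoping and budget accounting come next. Rearranging the above into $f(\opt) - f(\tilde{S}_t) \leq \bigl(1 - \tfrac{c_M(z_t^* \setminus \tilde{S}_{t-1})}{(1+r)B}\bigr)\bigl(f(\opt) - f(\tilde{S}_{t-1})\bigr)$, iterating over $t = 1,\dots,T$, and using $1-x \leq e^{-x}$ yields
\[
f(\opt) - f(\tilde{S}_T) \ \leq\ f(\opt)\,\exp\!\Bigl(-\tfrac{\sum_{t=1}^{T} c_M(z_t^* \setminus \tilde{S}_{t-1})}{(1+r)\,B}\Bigr).
\]
At termination, either $\supergroundset \subseteq \tilde{S}_T$ (so $\opt \subseteq \tilde{S}_T$ and the bound is immediate) or every remaining couple exceeds the leftover budget $B'_{\text{rem}}$. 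Since any couple has cost at most $2\max_v c_m(v) \leq 2\gamma B'$, this forces $B'_{\text{rem}} < 2\gamma B'$, so the spent budget exceeds $(1-2\gamma)B' = (1-\alpha)(1-2\gamma)B$; substituting into the exponent recovers exactly the claimed bound.

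The main obstacle is that the weighted-average step tacitly assumes every candidate couple $z_v$ fits in the current leftover budget, whereas some $z_v$'s may exceed $B'_{\text{rem}}$ in late iterations. To cope with this I would either continue the greedy virtually past its stopping point---the virtual trajectory always satisfies the ratio bound, and the single overshooting step is charged to the $2\gamma$ slack---or split the argument on the final iteration, invoking $\gamma < 1/2$ together with $(1-\alpha)B/\tau_{\max} > 2$ to show that once some $z_v$ ceases to fit, the algorithm is within one iteration of termination, so the missing progress is precisely what the $(1-2\gamma)$ factor absorbs.
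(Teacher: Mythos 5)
Your proposal follows essentially the same route as the paper's own proof: a per-iteration ratio bound obtained by pairing each $v\in\opt$ with a peer $p(v)\in\opt$, then telescoping and accounting for the budget spent at termination via the $(1-2\gamma)$ slack. The only cosmetic differences are that you compare directly against $\opt$ (using $c_M(\opt)\le c_e(\opt)=c(\opt)\le B$) where the paper routes through the auxiliary optimum $\bar\opt$ of \eqref{eq.opt_2} and then invokes $f(\bar\opt)\ge f(\opt)$, and that you apply $1-x\le e^{-x}$ directly where the paper first passes through a convexity bound $\prod_i(1-x_i)\le(1-\bar x)^t$. To your credit, you explicitly flag the budget-fitting subtlety---that in late iterations a couple $z_v$ drawn from $\opt$ may no longer fit within $B^t$ and so the greedy comparison is not automatic---which the paper's proof glosses over; your proposed fix (stop the telescoping at the first iteration $t_0$ where some $z_v$ overshoots, at which point $c_M(S_{t_0})>(1-2\gamma)B'$ already holds since each couple costs at most $2\gamma B'$, then use monotonicity $f(\selectSetAlg)\ge f(S_{t_0})$) is exactly what is needed to make the argument airtight.
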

\begin{proof}[Proof (Sketch)] The proof of the theorem is non-trivial, so we outline only its basic steps (see \cite{RSKF:17} for more details). Using the fact 
that $f$ is submodular, while Algorithm \ref{alg_pcpgreedyalg} is greedy in terms of $f/c_M$ ratio, we show that:
\begin{align*}
f(z^*_t|S_t) \ge \frac{c_M(z^*_t\backslash S_t)}{(1+r) \cdot B} \cdot [f(\bar\opt) - f(S_t)],
\end{align*}
where $S_t$ is equal to $\selectSetAlg$ at time-step $t$, while $\bar\opt$ is the optimum solution to optimization problem $\eqref{eq.opt_2}$ when budget $B' = B$. Now, following the the proofs of related results for submodular maximization under budget constraints (e.g., \citet{Sviridenko:2004:NMS:2308897.2309164}, \citet{2015-hcomp_nushi_crowd_access_path}), and adapting them to our setting, we obtain that:
\begin{align*}
f(\selectSetAlg) &\ge \left (1- e^{-\frac{(1-\alpha) \cdot (1-2 \cdot \gamma)}{1+r}} \right ) f(\bar\opt).
\end{align*}
As we argue in the full proof, $f(\bar\opt) \ge f(\opt)$ because $\bar\opt$ is obtained for the same budget as $\opt$, but the cost $c_M$ that lower bounds $c$. Together with the above inequality, this implies the statement of the theorem. 
\end{proof}
We see that the quality of the approximation ratio depends on the structural properties of the cost function, including slope $\alpha$, the maximum cost discrepancy between 
two nodes measured by $r$, and the maximum fraction of the budget assigned to a node, measured by $\gamma$. As $\alpha$ approaches its maximum value, i.e., $\alpha \rightarrow 1$, the approximation ration goes to $0$. This is consistent with the hardness result presented in Section 'Methodology', which shows the necessity of imposing structural constraints. One can reach a similar conclusion by analyzing $r$ as it goes to its maximal value, i.e., $r \rightarrow \infty$.  

To see this more clearly, we can express the results of the theorem in terms of the original optimization problem and the structural properties of payment function $\tau$.
Using the bound on slope $\alpha$ (Lemma \ref{lm_slope}), the boundedness of payments, which imply $r \le \frac{\pay_{max}}{\pay_{min}}$,
we obtain:   
\begin{corollary}\label{cor_main_result}
Assuming $B > 2 \cdot \frac{\omega \cdot \pay_{max}^2}{\pay_{min}}$, the output $\selectSetAlg$ of Algorithm \ref{alg_pcpgreedyalg} has the following guarantees on the utility: 
\begin{align}
f(\selectSetAlg) \ge \left (1- e^{-\frac{\pay_{min}^2}{\omega \pay_{max}^2 } \cdot \left (\frac{1}{2} - \frac{\omega \cdot \pay_{max}^2}{B \cdot \pay_{min}} \right)} \right ) \cdot f(\opt).
\end{align} 
\end{corollary}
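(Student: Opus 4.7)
The plan is to obtain the corollary by plugging the structural bounds implied by Lemma 2 and the boundedness of payments into the guarantee of Theorem 2, then simplifying. The three quantities in Theorem 2 that depend on the payment structure are the slope $\alpha$, the peer-cost ratio $r$, and the per-node budget fraction $\gamma$; I would bound each in terms of $\pay_{min}$, $\pay_{max}$, $\omega$, and $B$.

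First, I would collect the three elementary inequalities. Lemma 2 directly gives $1-\alpha \ge \pay_{min}/(\omega \pay_{max})$. For $r$, note that $c_m(v) = \min_{v_p \in \neighbor_v} \E(\pay(v,v_p))$ lies in $[\pay_{min}, \pay_{max}]$, hence $c_m(v)/c_m(v_p) \le \pay_{max}/\pay_{min}$, so one may take $r = \pay_{max}/\pay_{min}$ and get $1+r \le (\pay_{min}+\pay_{max})/\pay_{min} \le 2\pay_{max}/\pay_{min}$. For $\gamma$, use $c_m(v) \le \pay_{max}$ and the reduced budget $B' = (1-\alpha)B$ to obtain $\gamma \le \pay_{max}/((1-\alpha)B) \le \omega \pay_{max}^2/(\pay_{min} B)$.

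Second, I would verify that Theorem 2 is applicable. The theorem needs $\gamma \in (0, 1/2)$, and the discussion preceding it additionally assumes $(1-\alpha)B/\pay_{max} > 2$. Both are consequences of the hypothesis $B > 2\omega \pay_{max}^2/\pay_{min}$: the lower bound on $1-\alpha$ yields $(1-\alpha)B > 2\pay_{max}$, which in turn implies $\gamma < 1/2$. Third, I would substitute the bounds into the exponent of Theorem 2. Combining the bounds on $1-\alpha$ and $1+r$ gives $(1-\alpha)/(1+r) \ge \pay_{min}^2/(2\omega \pay_{max}^2)$; multiplying by $1-2\gamma \ge 1 - 2\omega\pay_{max}^2/(\pay_{min} B)$ and distributing produces
\begin{align*}
\frac{(1-\alpha)(1-2\gamma)}{1+r} \ge \frac{\pay_{min}^2}{\omega \pay_{max}^2}\left(\frac{1}{2} - \frac{\omega \pay_{max}^2}{B\,\pay_{min}}\right),
\end{align*}
which is precisely the exponent in the claim. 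Since $x \mapsto 1-e^{-x}$ is monotone increasing, plugging this into the conclusion of Theorem 2 yields the stated guarantee.

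The proof is essentially bookkeeping, so the only mild obstacle is managing constants: the factor $2$ arising in the bound $1+r \le 2\pay_{max}/\pay_{min}$ must combine cleanly with the $(1-2\gamma)$ term so that the final leading coefficient comes out as exactly $\tfrac{1}{2} - \omega\pay_{max}^2/(B\pay_{min})$ rather than a weaker expression. Beyond this arithmetic, no new ideas are required.
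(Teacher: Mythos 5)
Your proposal is correct and matches the paper's own approach: the paper also derives the corollary by substituting the bound $1-\alpha \ge \pay_{min}/(\omega\pay_{max})$ from Lemma 2, the payment-boundedness bound $r \le \pay_{max}/\pay_{min}$, and the implied bound $\gamma \le \omega\pay_{max}^2/(\pay_{min}B)$ into Theorem 2. Your bookkeeping, including the verification that the hypothesis on $B$ guarantees $\gamma < 1/2$ and $(1-\alpha)B/\pay_{max}>2$, is exactly what is needed and yields the stated exponent.
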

Therefore, whenever the maximum payment $\tau_{max}$ or the number of possible peers $\omega$ go to large values, the approximation factor becomes negligible. Notice that the number of possible peers $\omega$ is dependent on $\tau_{min}$, so we can alternatively say that for small values of $\tau_{min}$, i.e., $\tau_{min} \approx 0$, the quality of the obtained greedy solution is relatively low. In practice, however, we can often avoid these corner cases by adjusting the payment function, and thus $\tau_{max}$ and $\tau_{min}$.

\subsection{More Efficient Budget Expenditure}

The PPCGreedy algorithm, as described by Algorithm \ref{alg_pcpgreedyalg} does not necessarily spend the full budget on incentivizing nodes. This is because we use a reduced budget $B'$ when running the main steps of the algorithm. One can achieve a better budget efficiency by iteratively calling PPCGreedy method, as shown in Algorithm \ref{alg_pcpgreedyalgrec}, that we refer to as PPCGreedyIter. It is important to note that in the sub-procedure $PPCGreedy$ we take into account the current set of selected nodes $\selectSetAlg$ when examining the feasibility of a solution and evaluating the utility and cost functions. The budget reduction in the $PPCGreedy$ subroutine can, on the other hand, be done with the same (initial) $\alpha$. The procedure terminates when no new node is added, which is equivalent to the budget not changing between two consecutive iterations.  

The utility function $f$ is always evaluated with the selected set of nodes $\selectSetAlg$ from previous iterations, in the algorithm denoted by $f(\cdot \cup \selectSetAlg)$. The same is true for cost function $c$, denoted by $c(\cdot \cup \selectSetAlg)$, and its modular approximation $c_M$. Due to monotonicity of $f$, this means that the reached solution is always as good as the one obtained by PPCGreedy. Furthermore, the cost of the solution is within the budget constraints: this is because $c_e(\{ v \}) \le c_e(v | S)$  (Lemma \ref{lm_monotone}), so the slope $\alpha$ defined on $c_e(\cdot)$ upper bounds the one defined on $c(\cdot \cup \selectSetAlg)$, which implies that the subroutine $PPCGreedy$ makes a proper budget reduction. Therefore, the results of Theorem \ref{thm_main} and Corollary \ref{cor_main_result} are preserved. 

\begin{algorithm}[t!] 
\nl {\bf Output}: selected set $\selectSetAlg$\;
\nl {\bf Initialize}:\\
	\begin{itemize}
 		\item[$\Square$] $t=1$; $\selectSetAlg = \emptyset$; $\text{ budget } \budget^0 = \budget + \epsilon$; $\budget^1 = \budget$ \; 
    \end{itemize}
\nl //\texttt{Compute $\selectSetAlg$}\\
\nl \While { $\budget^t < \budget^{t-1}$ }{
\nl		$\selectSetAlg_{t} = PPCGreedy(f(\cdot \cup \selectSetAlg), c( \cdot \cup \selectSetAlg), \budget^t,\selectSetAlg)$ \;
\nl 		$\budget^{t+1} = \budget - c(\selectSetAlg_{t} \cup \selectSetAlg)$ \;
\nl		$\selectSetAlg = \selectSetAlg \cup \selectSetAlg_{t}$ \;
\nl		$t=t+1$ \;
	}
\nl {\bf Output}: $\selectSetAlg$\\
\caption{Algorithm \ppcgreedyalgiter}
\label{alg_pcpgreedyalgrec} 
\end{algorithm}
  
\vspace{-2mm}
\section{Experimental Evaluation}\label{sec.experiments}
\vspace{-1mm}

To evaluate our approach, we use a crowd  sensing test-bed of \citet{AS:17}, constructed from real measurements 
of $CO_2$ and user locations across an urban area. The concentrations of $CO_2$ in the city of Zurich were acquired with a 
{\em NODE+}\footnote{http://www.variableinc.com/node1} sensor. These measurements were used to fit a Gaussian variogram 
whose parameters indicate that 
the relevant correlation range between two measurement locations is about $R = 236$ meters. 
We use this distance to define a {\em disk} 
coverage function---for a set of points of interest, we count how many of these are within $R$ meters away from the set of selected 
points. More formally, given a set of points $S$ that represent the location of the selected sensors and set of points $S_{poi}$ that 
represent locations for which we would like to obtain $CO_2$ measurements, the objective function $f$ is defined as:
$f(S) = \sum_{s \in S_{poi}} \mathds 1_{min_{s' \in S} d(s, s') \le R}$. Here,
$\mathds 1_{cond}$ is an indicator variable, evaluating to $1$ when $cond$ is satisfied, and is $0$ otherwise, while $d(s,s')$ measures 
the distance in meters between locations $s$ and $s'$. The function $f$ is a coverage function, which is monotone and 
submodular \cite{2012-survey_krause_submodular}.

Points of interest $S_{poi}$ are predefined, and in total, there are $300$ of them. These were obtained using a publicly
available data ({\em OpenStreetMap}\footnote{http://wiki.openstreetmap.org/wiki/Node}), from which we randomly selected 
$300$ locations from an area in the center of New York City. To identify the locations of available crowd-sensors, 
i.e., the ground set $V$, we use the population statistics of the test-bed, which give us the likelihood of a user appearing 
in one of the $300$ points. This statistics is inferred from a publicly accessible
dataset ({\em Strava}\footnote{http://metro.strava.com/}) that contains the mobility patterns of cyclists for a period of $6$ days. 
We sample from the likelihood $1000$ points to obtain sensing locations and then we perturb them by $50$ meters. 

As a peer prediction scoring rule, we use the output agreement mechanism as described in Section {\em Problem Statement}. The expected 
score of this mechanism for two points $s$ and $s'$ in truthful reporting regime is equal to $\E(d(s,s')^2)$. We approximate
the expected value of OA for two sensors $s$ and $s'$ by using the variogram of the test-bed. 
More precisely,
given the range parameter $R = 236$, we estimate the expected payoff between two sensors as:
$\E(\tau(s_1, s_2)) = 1-e^{-\frac{d(s_1,s_2)^2}{a\cdot R^2}}$,
where $a$ is set to $\frac{1}{3}$.

\begin{figure*}
\begin{multicols}{2}
    \centering
    \includegraphics[width=0.83\linewidth]{./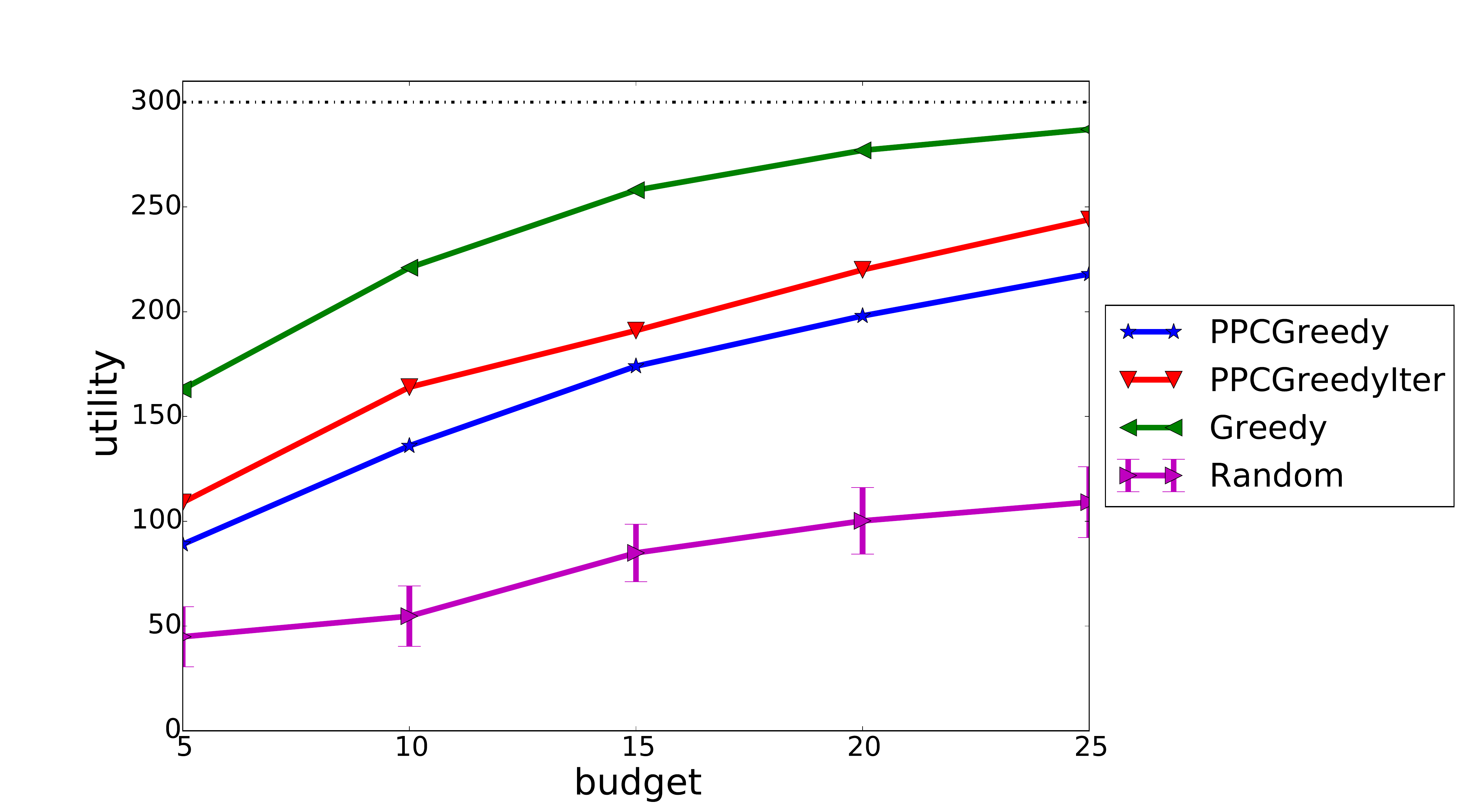}
    \subcaption{Performance as we vary the available budget.}\label{fig:budget}
    \par 
    \centering
    \includegraphics[width=0.83\linewidth]{./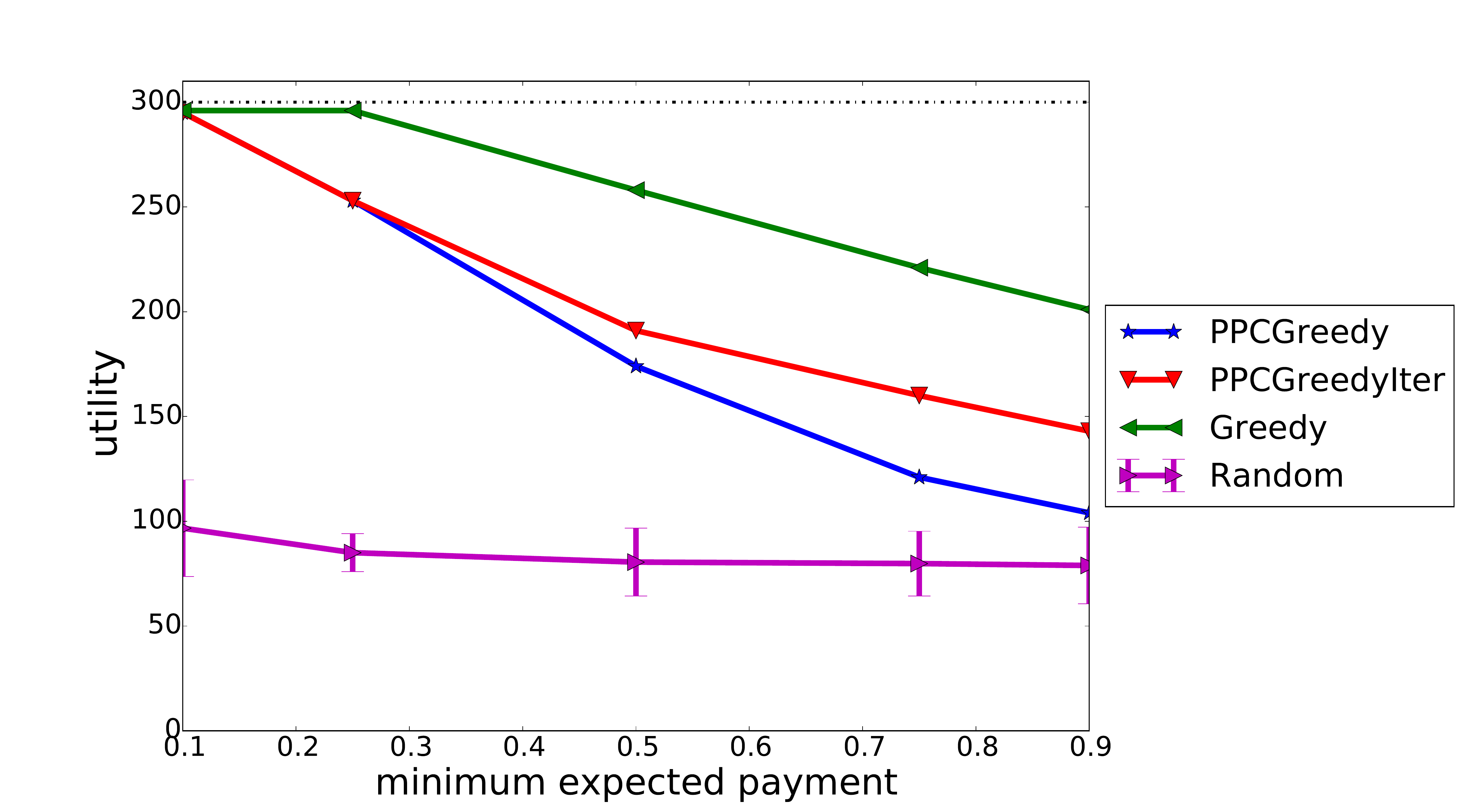}
    \subcaption{Performance as we vary the minimum payments.}\label{fig:min_pay}
    \par 
    \end{multicols}
\vspace{-5mm}
\caption{Experimental results show how the utility $f(\selectSetAlg)$ changes as we increase budget $B$ or the minimum expected payments $\pay_{min}$. $\selectSetAlg$ is the output of different methods. Increasing $B$ is beneficial as it allows more sensors to be selected. On the other hand, as we increase $\pay_{min}$, the sensors have less peers and they need to be payed more.  {\em Random} is run $10$ times and we show the means and standard deviations. }
\vspace{-0.5mm}
\end{figure*}

\textbf{Results.} We test our approaches, PPCGreedy and PPCGreedyIter, against two other baselines: (a) a random selection (denoted by {\em Random}) that satisfies peer-prediction constraints, 
(b) a greedy approach (denoted by {\em Greedy}) that assumes it suffices to reward each sensor with $\tau_{min}$, without providing incentives for accurate reporting. 
Clearly, the latter baseline represents an optimistic approach whose performance upper bounds that of the proposed algorithms, while the former one is 
likely to lower bound their performance. In all the cases, the expected budget is at most $B$. 

We perform two different tests. In the first test, we vary the total available budget $B$ from $5$ to $25$ at steps of size $5$. At the same 
time, we keep the minimal expected payment to $\tau_{min} = 0.5$. As we can see from Figure \ref{fig:budget}, as the total budget increases, all the 
methods perform better. However, the increase is more notable for non-random algorithms. The performance of PPCGreedyIter is generally 
better than the one of PPCGreedy, and this is due to the spent budget --- as explained in the previous section, PPCGreedy is only the first step of 
PPCGreedyIter, which further iteratively runs PPCGreedy on the remaining budget. 

In the second test, we vary the minimal expected payment $\tau_{min}$, which now takes values in $\{0.1, 0.25, 0.5, 0.75, 0.9\}$. Budget $B$ is set to $15$. The results are given in Figure \ref{fig:min_pay}. 
Except for Random, 
a general trend is that increasing the minimal payments leads to lower performance, which is not surprising given that the number of peers  and the budget per sensor
decrease in that case. Random in general performs much worse than other techniques for all values of $\tau_{min}$. Moreover, notice that the discrepancy in performance between
PPCGreedy, PPCGreedyIter, and Greedy, increases with $\tau_{min}$. Initially, all the non-random algorithms find an optimum of function $f$, achieving the utility 
equal to $300$. 

\vspace{-2mm}
\section{Related Work}\label{sec.related}

{\bf Information elicitation.} A standard incentives for quality are typically categorized into 
gold standard techniques, such as proper scoring rules \cite{GR:07} or prediction markets \cite{CP:07}, or 
peer-prediction techniques, such as the classical peer-prediction \cite{MRZ:05}  or Bayesian truth serums \cite{P:04,WP:12a,RF:13}.
We focus in this paper on peer-prediction techniques due to the scalability of elicitation without verification 
for acquiring large amounts of highly distributed information. 
Recently, several peer-predictions were proposed for various crowdsourcing scenarios. 
These included micro-task crowdsourcing \cite{DG:13},  opinion polling \cite{JF:11}, information markets \cite{A:17}, peer grading \cite{Shnayder:2016:EC}, 
and most importantly for this work, crowdsensing \cite{RF:16b}. The proposed mechanisms for these domains follow the 
standard principles of the classical peer-prediction, e.g., incentivizing participants by comparing their reports and 
placing higher scores for a priori less likely matches. However, they also often extend the design of 
the original methods by making them more robust in terms of the required number of participants and the knowledge about  
them, the heterogeneity of users and tasks, or susceptibility to collusive behaviors \cite{FR:17}. 
We analyze orthogonal characteristics important for deploying such mechanisms in practice, i.e., 
budget and cost acquisition constraints. Although the prior work (e.g., \citet{DBLP:conf/ijcai/LiuC16}) does study meta-mechanisms 
that make peer-predictions proper in terms of effort exertion, it is often based on scaling techniques,
which either ignore budget limitations or the cost of effort.

{\bf Submodular function maximization.} From the technical side, the most important aspect relates to submodular function maximization.
While there is a sizeable literature on this topic (e.g., \citet{krause2011submodularity}, \citet{2012-survey_krause_submodular}), we 
mostly focus on the prior work that is closely related to the techniques used in this paper. Our basic objective is a subset selection 
under budget (knapsack) constraints (e.g., \citet{Sviridenko:2004:NMS:2308897.2309164}), and we base our algorithmic techniques on a 
simple greedy approach \cite{2015-hcomp_nushi_crowd_access_path}. Notice that we additionally have a graph based constraint,
which is in spirit similar to \citet{singla15netexp}, although we are solving a different optimization problem. Arguably,
this paper is most related to submodular maximization with submodular budget constraints \cite{Iyer:2013:SOS:2999792.2999884}; 
contrary to this work, our budget constraints are not necessarily sub/super-modular. It is also worth mentioning the hardness results that relate to the ones 
obtained in this paper, such as  the inapproximability  of the maximum of a submodular non-monotone, possibly negative, profit 
function \cite{Feige:2008:CAM:1367497.1367521}.  

\vspace{-2mm}
\section{Conclusion}\label{sec.conclusions}

In this paper, we have introduced an information elicitation model for data collection from distributed 
sources when the incentive mechanism is based on peer-predictions. We have shown that optimal 
information gathering is computationally infeasible in that even approximating the optimal solution is 
NP-hard. However, given structural constraints on peer-prediction incentives, we have 
proposed two greedy methods that achieve good performance relative to the optimum, and 
have tested their performance empirically on a realistic crowd-sensing test-best.


\vspace{0.5mm}
{\bfseries Acknowledgments}
This work was supported in part by the Swiss National Science Foundation, Nano-Tera.ch program as part of the Opensense II project, ERC StG 307036, a SNSF Early Postdoc Mobility fellowship, and a Facebook Graduate fellowship.

\bibliographystyle{aaai}
\bibliography{Radanovic-Singla}

\clearpage
\appendix 
{\allowdisplaybreaks

\section*{Appendix: Information Gathering with Peers}

\subsection{Proof of Lemma 1} 
{\bf Statement: } {\em Cost function $c_e$ defined by \eqref{eq_cost_e} is monotone. Furthermore: $c_e(\{v\}) \le c_e(v|S)$, for all $S \in 2^V \backslash \{ v \}$.  }
\begin{proof}
Denote by $S_p^v$ a set of nodes in $V$ who have peers in $S \cup \{v\}$. 
Notice that  $S_p \subseteq S_p^v$. For a given set $S \in 2^V$ and $v \notin S$, 
we have: 
\begin{align*}
&c_e(S \cup \{ v \}) =  c(S_p^v) \\
&+ \sum_{v' \in (S \cup \{ v \}) \backslash S_p^v}  \min_{v_p \in V \cap \neighbor_{v'}} \E(\pay(v', v_p)) \\ 
&= \sum_{v' \in (S \cup \{ v \}) \cap S_p^v} \max_{v_p \in (S \cup \{v\}) \cap \neighbor_{v'}} \E(\pay(v', v_p)) \\
&+\sum_{v' \in (S \cup \{ v \}) \backslash S_p^v}  \min_{v_p \in V \cap \neighbor_{v'}} \E(\pay(v', v_p))\\
&\ge \sum_{v' \in S \cap S_p} \max_{v_p \in S \cap \neighbor_{v'}} \E(\pay(v', v_p)) \\
&+ \sum_{v' \in (S \cap S_p^v) \backslash S_p} \min_{v_p \in V \cap \neighbor_{v'}} \E( \pay(v', v_p))\\
&+\min_{v_p \in V \cap \neighbor_v} \E(\pay(v, v_p)) \\ 
&+\sum_{v' \in S \backslash S_p^v}  \min_{v_p \in V \cap \neighbor_{v'}} \E(\pay(v', v_p))\\
&= \sum_{v' \in S \cap S_p} \max_{v_p \in S \cap \neighbor_{v'}} \E(\pay(v', v_p)) \\
&+ \sum_{v' \in S \backslash S_p} \min_{v_p \in V \cap \neighbor_{v'}}  \E(\pay(v', v_p))\\
&+\min_{v_p \in V \cap \neighbor_v} \E(\pay(v, v_p)) \\
&\ge c(S_p) + \sum_{v' \in S \backslash S_p} \min_{v_p \in V \cap \neighbor_{v'}}  \E(\pay(v', v_p))\\
&+ \pay_{min} = c_e(S) + \pay_{min} > c_e(S).
\end{align*}
Furthermore, from the proof we see that $c_e({v}|S) \ge \min_{v_p \in V \cap \neighbor_v} \pay(v, v_p) = c_e({v})$,
thus proving the second part of the statement. 
\end{proof}

\subsection{Proof of Theorem 1}
{\bf Statement: } {\em For any $\epsilon > 0$, it is NP-hard to find a solution $S$ to optimization problem \eqref{eq.opt_e} (and thus \eqref{eq.opt}) 
such that $ \frac{f(S)}{f(\opt)} \ge \frac{1}{|V|^{1-\epsilon}}$.}
\begin{proof}
We prove the statement by reducing the maximum clique problem to optimization \eqref{eq.opt_e}. 
Consider an arbitrary undirected unweighted graph $G' = (V, E')$ for which we wish to compute the maximum clique. 
To reduce the maximum clique problem to \eqref{eq.opt_e}:
\begin{enumerate}
\item define function $f$ as $f(S) = |S|$, which is clearly monotone and submodular;    
\item define payment function as: $\tau(v, v_p) = \tau(v_p, v) = \pay_{max}$ if $(v, v_p) \notin E'$, and $\tau(v, v_p) = \tau(v_p, v) = \pay_{min}$ otherwise; 
\item set budget $\budget$ to $\budget = |V| \cdot \pay_{min}$;
\item and set $\pay_{max} > \budget$.
\end{enumerate}
Notice that such an arrangement induces a fully connected graph $\graph$. Furthermore, we defined deterministic payment functions $\tau(v, v_p)$ and $\tau(v_p, v)$, 
but one can use $\E(\tau(v, v_p))$ and $\E(\tau(v_p, v))$ instead. 
Points 2 and 4 ensure that any solution to optimization problem \eqref{eq.opt_e} is a clique in graph $G'$; otherwise, the budget constraints would be violated 
in solving \eqref{eq.opt_e}. 
Likewise, points 2 and 3 ensure that any clique is permitted as a potential solution w.r.t. the budget constraint. 
Finally, point 1 ensures that we search for a clique with the maximum number of vertices. Since the reduction is computationally 
efficient (polynomial in the graph size, i.e. $|V|$ and $\edges$), optimization \eqref{eq.opt} is at least as hard as the maximum clique problem.  Using the fact 
that the maximum clique problem is hard to approximate within factor $\frac{1}{|V|^{1-\epsilon}}$ \cite{Hastad1999}, we obtain the claim.
\end{proof}

\subsection{Proof of Lemma 2}
{\bf Statement: } {\em The slope of cost function $c_e$ is upper-bounded by:
\begin{align*}
\alpha  \le 1- \frac{\pay_{min}}{\omega \cdot \pay_{max}}.
\end{align*}
\begin{proof}
Notice that the slope is maximized when there exist a node $v$ such that:
\begin{itemize}
\item the expected payoff of $v$ when it's scored against its worst peer $v_{p, min}$ is $\pay_{min}$,
\item the expected payoff of $v$ increases to $\pay_{max}$ whenever any other peer $v_{p'} \ne v_{p, min}$ is used for scoring,
\item any $v$'s peer $v_{p'}$ (including $v_{p, min}$\footnote{The analysis slightly changes if $v_p$ is required to have the same payoff as $v$ when they
are mutual peers, but the main result stays the same.}) achieves expected payoff $\pay_{max}$ when scored against $v$ and otherwise, when
scored against some other peer, they achieve $\pay_{min}$.
\end{itemize} 
This gives us:
\begin{align*}
\alpha &\le 1 - \frac{\pay_{min}}{\pay_{max} + (\omega - 1) \cdot \pay_{max} - (\omega - 1) \cdot \pay_{min} }  \\
&\le 1 - \frac{\pay_{min}}{\omega  \cdot \pay_{max} }
\end{align*} 
\end{proof} 

\subsection{Proof of Lemma 3}
{\bf Statement: } {\em
Any feasible solution $S$ to optimization problem $\eqref{eq.opt_2}$ is also a feasible solution to optimization problem \eqref{eq.opt_e} (and thus \eqref{eq.opt}) for $B' \le (1-\alpha) \cdot B$, where $\alpha$ is the slope of cost function $c_e$. 
}
\begin{proof}
Since both problems require that $S \in 2^\groundSet_\feasible$, we only need to show that the budget constraint in $\eqref{eq.opt_e}$ is not violated when $S$ is selected. 
Let us enumerate the elements of $S$, so that each element is assigned an index $i$.
We have:
\begin{align*}
c_M(S) &= \sum_{v \in S} c_m(v) = \sum_{v \in S} c_e(\{v\}) \\
&= \sum_{i = 1}^{|S|} \frac{c_e(\{v_i\})}{c_e(\{v_i\}|\cup_{j <i} \{v_j\})} \cdot  c_e(\{v_i\}|\cup_{j <i} \{v_j\})\\
&\ge (1-\alpha) \sum_{i = 1}^{|S|} c_e(\{v_i\}|\cup_{j <i} \{v_j\}) \\
&= (1-\alpha) \cdot c_e(S).
\end{align*}
The (first) inequality is due to the slope of $c_e$. Now, since $c_M(S) \le B'$, it follows that $c_e(S) \le \frac{c_M(S)}{1-\alpha} \le \frac{B'}{1-\alpha} \le B$, 
which proves the statement. 
\end{proof}

\subsection{Proof of Theorem 2}
{\bf Statement: } {\em
Let the maximal relative difference between modular costs of two peer nodes be bounded by $r$, i.e., $r \ge \max_{v \in V, v_p \in \neighbor_v}\frac{c_m(v)}{c_m(v_p)}$, and let $\gamma = \max_{v \in V}\frac{c_{m}(v)}{B'} \in (0, \frac{1}{2})$. 
Then, the output $\selectSetAlg$ of Algorithm \ref{alg_pcpgreedyalg} has the following guarantees on the utility: 
\begin{align}
f(\selectSetAlg) \ge \left (1- e^{-\frac{(1-\alpha) \cdot (1-2 \cdot \gamma)}{1+r}} \right ) \cdot f(\opt).
\end{align} 
}
\begin{proof}
Let $\opt'$ denote an optimal solution to optimization problem $\eqref{eq.opt_2}$ when budget $B' = (1-\alpha) \cdot B$, $\bar \opt$ denote an optimal solution to optimization problem $\eqref{eq.opt_2}$ when budget $B' = B$. Cleary $f(\bar \opt) \ge f(\opt')$. Furthermore, $f(\bar \opt) \ge f(\opt)$, where $\opt$ is an optimal solution to optimization problem $\eqref{eq.opt_e}$ (and $\eqref{eq.opt}$), because cost function $c_e(S)$ is lower bounded by its modular approximation $c_M(S)$. Therefore, it suffices to lower bound $f(\selectSetAlg)$ with $f(\bar \opt)$ modulated by the approximation factor in the statement of the theorem. 

Let $S_t$ represent the current solution $\selectSetAlg$ of the greedy algorithm at time step $t$ and assume w.l.o.g. that $z^*_t$ is not $NULL$.\footnote{This is true for at least $t = 1$ due to the assumption on the boundedness of modular payments, i.e. $\gamma < \frac{1}{2}$.} Due to monotonicity and submodularity of function $f$ and the fact that Algorithm \ref{alg_pcpgreedyalg} is greedy in terms of $f/c$ ratio, we have:  
\begin{align*}
f(\bar\opt) &\le_{\text{mon.}} f(\bar\opt \cup S_t) \le_{\text{sub.}} f(S_t) + \sum_{v \in \bar\opt \backslash S_t} f(v|S_t) \\
&\le_{\text{mon.:adding a peer}} f(S_t) + \sum_{v \in \bar\opt  \backslash S_t} f(\{v, v_{v, p}\}|S_t) \\
&= f(S_t) + \sum_{v \in \bar\opt  \backslash S_t} \frac{c_M(\{v, v_{v, p}\})}{c_M(\{v, v_{v, p}\})} \cdot f(\{v, v_{v, p}\}|S_t),
\end{align*}
where $v_{v, p}$ is any peer of $v$. Using the fact that Algorithm \ref{alg_pcpgreedyalg} is greedy in terms of $f/c$ ratio, we further obtain:
\begin{align*}
f(\bar\opt) &\le f(S_t) + \frac{f(z^*_t|S_t)}{c_M(z^*_t\backslash S_t)} \cdot \sum_{v \in \bar\opt  \backslash S_t} c_M(\{v, v_{v, p}\}).
\end{align*}
Now, $r \ge \max_{v,v_p\in V: v_p \in \neighbor_v}\frac{c_m(v)}{c_m(v_p)}$ and $\sum_{v \in \bar\opt} c_m(v) \le B$ give us:
\begin{align*}
f(\bar\opt) &\le f(S_t) + \frac{f(z^*_t|S_t)}{c_M(z^*_t\backslash S_t)} \cdot (1+r) \sum_{v \in \bar\opt  \backslash S_t} c_m(v)\\
&\le f(S_t) + \frac{f(z^*_t|S_t)}{c_M(z^*_t\backslash S_t)} \cdot (1+r) \cdot B
\end{align*}
By rearranging, we get:
\begin{align*}
f(z^*_t|S_t) \ge \frac{c_M(z^*_t\backslash S_t)}{(1+r) \cdot B} \cdot [f(\bar\opt) - f(S_t) ],
\end{align*}
and further since $f(S_{t+1}) - f(S_t) = f(z^*_t|S_t)$ and $f(\emptyset) = 0$:
\begin{align*}
&f(S_{t+1}) - f(\bar\opt) \ge \left (1 - \frac{c_M(z^*_t\backslash S_t)}{(1+r) \cdot B} \right ) \cdot [f(S_t) - f(\bar\opt) ]\\
&\ge ... \ge -\prod_{i = 1}^{t+1} \left (1 - \frac{c_M(z^*_{i-1}\backslash S_{i-1})}{(1+r) \cdot B} \right) \cdot f(\bar\opt),
\end{align*}
where we used an inductive argument. In other words:
\begin{align*}
f(S_{t}) \ge \left (1-\prod_{i = 1}^{t} \left (1 - \frac{c_M(z^*_{i-1}\backslash S_{i-1})}{(1+r) \cdot B} \right) \right ) f(\bar\opt), 
\end{align*}
where we for notational convenience considered $S_t$ instead of $S_{t+1}$. Now, because:
\begin{align*}
&\prod_{i = 1}^t (1+x_i) = (e^{\frac{1}{t}\sum_{i = 1}^t \ln (1+x_i)})^t \\
&\le_{\text{$e$ is convex}} \left ( \frac{1}{t}\sum_{i = 1}^t e^{\ln (1+x_i)} \right)^t 
= \left ( 1 + \sum_{i = 1}^t \frac{x_i}{t}\right)^t,
\end{align*}
for $x_i \in (-1, 1)$, we obtain:
\begin{align*}
f(S_{t}) &\ge \left (1- \left (1 - \sum_{i = 1}^{t} \frac{c_M(z^*_{i-1}\backslash S_{i-1})}{t \cdot (1+r) \cdot B} \right)^t \right ) f(\bar\opt)\\
&\ge \left (1- \left (1 - \frac{c_M(S_t)}{t \cdot (1+r) \cdot B} \right)^t \right ) f(\bar\opt),
\end{align*}
where we used the fact that $\sum_{i = 1}^{t} c_M(z^*_{i-1}\backslash S_{i-1}) = c_M(S_t)$. Finally, we transform $B$ to $B' = (1-\alpha) \cdot B$ to obtain:
\begin{align*}
f(S_{t}) \ge \left (1- \left (1 - \frac{c_M(S_t) \cdot (1-\alpha)}{t \cdot (1+r) \cdot B'} \right)^t \right ) f(\bar\opt).
\end{align*}

Now, at a certain time step $t^*$, Algorithm \ref{alg_pcpgreedyalg} cannot find any node or a pair of nodes that could be added. Because modular costs are bounded by $c_m(v) \le \gamma \cdot B'$, we know that at time step $t^*$, Algorithm \ref{alg_pcpgreedyalg} spent at least $B' - 2\cdot \gamma \cdot B'$ budget.\footnote{Otherwise, there would be a node or a pair of nodes that could be added, or all of the elements are selected, which is the optimal choice and thus trivially proves the statement.} Hence, $c_M(S_t) \ge (1-2\cdot \gamma) \cdot B'$, which gives us that:
\begin{align*}
f(S_{t^*}) &\ge \left (1- \left (1 - \frac{(1-2 \cdot \gamma) \cdot (1-\alpha)}{ t^*  \cdot (1+r)} \right)^{t^*} \right ) f(\bar\opt)\\
&\ge \left (1- e^{-\frac{(1-\alpha) \cdot (1-2 \cdot \gamma)}{1+r}} \right ) f(\bar\opt).
\end{align*} 
Finally, using the fact that $f(\bar\opt) \ge f(\opt)$ and $\selectSetAlg = S_{t^*}$ we obtain: 
\begin{align*}
f(\selectSetAlg) \ge \left (1- e^{-\frac{(1-\alpha) \cdot (1-2 \cdot \gamma)}{1+r}} \right ) \cdot f(\opt).
\end{align*} 
\end{proof}

\end{document}